\documentclass[a4paper,12pt]{article}

\usepackage[T1]{fontenc}
\usepackage[utf8]{inputenc}
\usepackage{microtype}
\usepackage{amsmath,amssymb,amsthm,mathtools}

\usepackage{authblk}
\usepackage{graphicx}
\usepackage{tikz}
\usetikzlibrary{arrows.meta}
\usepackage{float}

\usepackage{enumitem}
\usepackage[font=small,labelfont=bf,labelsep=colon]{caption}
\usepackage{geometry}
\usepackage{hyperref}
\hypersetup{hidelinks}

\usepackage[svgnames, table]{xcolor}
\usepackage[normalem]{ulem}

\newtheorem{theorem}{Theorem}[section]
\newtheorem{lemma}[theorem]{Lemma}
\newtheorem{definition}[theorem]{Definition}
\newtheorem{corollary}[theorem]{Corollary}

\theoremstyle{remark}
\newtheorem{remark}[theorem]{Remark}
\theoremstyle{plain}

\newcommand{\gap}{\text{\texttt{-}}}
\newcommand{\Ptime}{\ensuremath{\mathsf{P}}}
\newcommand{\NP}{\ensuremath{\mathsf{NP}}}

\newcommand{\MAXSNP}{\textsf{MAXSNP}}

\newcommand{\MSAS}{\ensuremath{\mathsf{MSA\text{-}S\text{-}DEC}}}

\newcommand{\MSAOPTlambda}{\ensuremath{\mathsf{MSA\text{-}S\text{-}OPT}_{\lambda}}}

\title{Structure-Informed Multiple Sequence Alignment: \\A Formal Model and Hardness Results}

\author[1]{Kanazawa~Yoshiki}
\author[2]{Naphan~Benchasattabuse}
\author[2]{Michal~Hajdušek}
\author[3]{Rodney~Van~Meter}

\affil[1]{Graduate School of Media and Governance, Keio University, Fujisawa, Kanagawa, Japan}
\affil[2]{Graduate School of Media Design, Keio University, Yokohama, Kanagawa, Japan}
\affil[3]{Faculty of Environment and Information Studies, Keio University, Fujisawa, Kanagawa, Japan}
\date{\vspace{-1em}\small
\texttt{ky01752221@keio.jp}, \texttt{\{whit3z, michal, rdv\}@sfc.wide.ad.jp}
}

\begin{document}
\maketitle

\begin{abstract}
We formulate a structure-informed multiple sequence alignment problem, denoted \textsf{MSA-S}.
The model abstracts biological sequences as strings and structural information as designated position-pairs.
It augments a fixed pairwise string score—defined by a fixed non-gap symbol-pair scoring rule and fixed affine gap penalties—with a binary overlap score on designated position-pairs, which can be interpreted as a contact-map overlap score in structural applications. This yields a fixed-score, integer-valued optimization model suitable for complexity-theoretic analysis.

Under this formulation, we show that the decision problem \textsf{MSA-S-DEC} is NP-complete for a broad class of fixed pairwise string scoring schemes. We also show that NP-hardness persists even under the restriction that every designated position-pair set is nonempty and the pair-overlap threshold is strictly positive.
For the associated scalarized optimization problem \MSAOPTlambda\ with any fixed rational constant \(\lambda \ge 1\), we further show that, under the canonical unit scheme for the non-gap symbol-pair scoring rule, \MSAOPTlambda\ admits no polynomial-time approximation scheme (PTAS) even for two input strings
\((k=2)\), unless \(\mathsf{P}=\mathsf{NP}\).

These results establish a formal complexity-theoretic baseline for structure-informed multiple sequence alignment.
\end{abstract}

\section{Introduction}
Multiple Sequence Alignment (MSA) is a foundational technique in bioinformatics for comparing related biological sequences, such as DNA, RNA, or protein sequences. By identifying patterns of conservation and variation among them, MSA provides information about functional constraints and evolutionary relationships, and can also inform structural interpretation~\cite{Chatzou2016,Chao2022}.
It is therefore widely used as a preliminary step for downstream analyses such as phylogenetic inference, protein family and domain analysis, and functional annotation~\cite{Warnow2021}.

Computationally, an MSA represents these comparisons by arranging the input sequences into a common column representation, usually by inserting gaps while preserving the order of symbols in each sequence. Symbols placed in the same column are interpreted as hypothesized positional correspondences. The quality of an alignment is then evaluated by a scoring scheme, for example by rewarding
compatible symbols and penalizing gaps, often with gap-opening and gap-extension penalties~\cite{Durbin1998,Gotoh1982}.

Although optimal pairwise alignment can be computed efficiently by dynamic programming~\cite{Needleman1970,Smith1981}, exact dynamic programming formulations of MSA scale poorly with the number of sequences. More fundamentally, standard MSA formulations based on the sum-of-pairs (SP) score are known to be computationally intractable: Wang and Jiang proved NP-completeness for the SP-score decision version of MSA, and subsequent work established NP-hardness for broader scoring classes and common variants of multiple alignment~\cite{Wang1994,Just2001,Elias2006}. Consequently, practical MSA methods rely on heuristic strategies and make different trade-offs among accuracy, speed, and applicability to different classes of input data~\cite{Edgar2006,Chatzou2016,Chao2022,Prousalis2025}.

Within this broader landscape, a line of work has explored auxiliary information beyond sequence similarity. In protein sequence alignment, structural information is a natural source of such auxiliary information: available structural information can provide constraints on positional compatibility, and existing structure-aware alignment methods incorporate such information either directly or in combination with sequence-based scoring~\cite{OSullivan2004,Pei2008,Rozewicki2019}.

A comparative benchmark by Carpentier and Chomilier evaluated sequence-based, structure-based, and sequence--structure-based alignment programs against reference alignments across multiple databases, and suggested that structural information can be beneficial but that its contribution is not uniform across methods and benchmark settings~\cite{Carpentier2019}. This indicates that the effect of structural information depends not only on the availability of an additional data source, but also on how that information is represented and combined with sequence-based scoring.

These observations motivate studying a fixed-score combinatorial optimization model in which the sequence-based alignment score and the auxiliary pair-based information are represented explicitly and can be analyzed separately.

In this paper, we formulate a structure-informed multiple sequence alignment problem, denoted \textsf{MSA-S}. To make the model accessible as a combinatorial problem, we abstract biological sequences as strings over a fixed finite alphabet, and abstract structural information as designated position-pairs in each string. The model is intentionally minimal: it augments a fixed pairwise string score, specified by a fixed non-gap symbol-pair scoring rule and fixed affine gap penalties, with a binary overlap score on designated position-pairs. In structural applications, these designated pairs may be viewed as contact-map pairs, obtained for example by thresholding distances between spatially close positions~\cite{Goldman1999}. This yields a fixed-score, integer-valued optimization model that is simple enough for worst-case complexity analysis while still making the sequence and structure contributions explicit.

We show that the associated decision problem is NP-complete for a broad class of fixed pairwise string scoring schemes, and that NP-hardness persists even under the restriction that every designated position-pair set is nonempty and the pair-overlap threshold is strictly positive. We further show that, under the canonical unit scheme for the non-gap symbol-pair scoring rule, with the fixed affine gap penalties retained, the corresponding scalarized optimization problem admits no polynomial-time approximation scheme (PTAS) even for two input strings \((k=2)\), unless \(\mathsf{P}=\mathsf{NP}\).

Together, these results provide a formal complexity-theoretic baseline for structure-informed multiple sequence alignment. 
They show that even a deliberately simplified fixed-score model is computationally intractable in its decision form, and that its scalarized optimization variant does not admit a PTAS under the stated assumptions.

\section{Model and problem definitions}\label{sec:model}
We formulate a combinatorial optimization problem on strings with auxiliary pair information. The goal of this section is to define the problem in a purely string-based way, without assuming prior familiarity with biological sequence alignment.

\subsection{Informal description}
Before giving the formal definition, we first describe what an alignment represents.
Given \(k\) input strings, an alignment is obtained by inserting gap symbols into the strings. Gap insertions create empty positions but do not change the left-to-right order of the original symbols in any string. After gaps have been inserted, all rows have the same length. Symbols appearing in the same column are then interpreted as being aligned with one another.

For example, if two strings are written as
\[
S_1 = \texttt{ACG},
\qquad
S_2 = \texttt{AG},
\]
then one possible alignment is
\[
\begin{array}{ccc}
\texttt{A} & \texttt{C} & \texttt{G}\\
\texttt{A} & \gap     & \texttt{G}
\end{array}
\]
where a gap has been inserted between the two symbols of \(S_2\). Thus, the first symbols of the two strings are aligned, the third symbol of \(S_1\) is aligned with the second symbol of \(S_2\), and the second symbol of \(S_1\) is placed opposite the inserted gap.

We next explain how structural information is represented in this model.
In structural applications, three-dimensional information is first converted into binary contact information on pairs of positions. For each sequence \(S_i\), assume that a known or predicted three-dimensional structure is available. A pair of positions is included in the contact set \(C_i\) when the corresponding residues are sufficiently close in the structure, such as when their spatial distance is below a fixed threshold. Thus, \(C_i\) records which pairs of positions are treated as contacts.

Given an alignment, each original position is assigned to an alignment column.
Thus, a contact pair in \(C_i\) is mapped to the two alignment columns occupied by its two positions. The pair-overlap score rewards cases in which contact pairs from different sequences are mapped to the same two alignment columns. 
For example, if a contact pair in one sequence is placed in columns 1 and 3, and a contact pair in another sequence is also placed in columns 1 and 3, then these two contact pairs contribute one unit to the pair-overlap score. In this sense, the score measures whether pairwise structural relationships are preserved by the alignment.

For the formal model studied in this paper, however, the construction of the sets \(C_i\) is not part of the problem. We assume that the sets \(C_i\) are already given. They may come from structural data, for example by thresholding distances between positions, but in the combinatorial formulation they are arbitrary designated sets of position-pairs.

\subsection{Basic objects and alignments}
We first specify the objects to be aligned. The model starts with \(k\) input strings, where \(k\) is the number of strings. Each input string is represented as a finite sequence of symbols. The set of symbols that may appear in the input strings is denoted by \(\Sigma\). Throughout the paper, \(\Sigma\) is treated as a fixed finite alphabet, and we assume \(|\Sigma|\ge 2\). We write \(S_i\) for the \(i\)-th input string, and \(L_i\) for its original length. Thus, the positions of \(S_i\) are indexed by \(1,\ldots,L_i\), and each position contains a symbol from \(\Sigma\).

An alignment may insert gap symbols into the strings in order to place their symbols into common columns. We denote the gap symbol by \(\gap\). The gap symbol is not an ordinary input symbol, so we assume \(\gap\notin\Sigma\). In other words, the original strings contain only symbols from \(\Sigma\), while \(\gap\) is introduced only when forming an alignment.

An alignment of \(S_1,\ldots,S_k\) is represented as a \(k\times L\) matrix
\[
A\in(\Sigma\cup\{\gap\})^{k\times L}
\]
for some integer \(L\), where the rows correspond to the input strings and the columns are alignment columns. The value of \(L\) is the alignment length; it may be larger than the length of an individual input string because gaps may be inserted. Each row of \(A\) is obtained from the corresponding input string by inserting gap symbols, so deleting all gaps from row \(i\) recovers \(S_i\). We exclude all-gap columns, since they do not affect any positional correspondence or score.

\subsection{Designated position-pair sets}
For each input string \(S_i\), the input also includes a set
\[
C_i \subseteq \{(p,q): 1\le p<q\le L_i\}
\]
of designated position-pairs. An element \((p,q)\in C_i\) means that the pair of positions \(p\) and \(q\) in \(S_i\) is used in the pair-overlap score defined below. Position-pairs not in \(C_i\) are ignored by that score.

Such a set may be derived from structural information, for example by selecting pairs that satisfy a distance-threshold condition. In the formal model studied here, however, the construction of \(C_i\) is not part of the problem; \(C_i\) is simply given as part of the input.

\subsection{Alignment-induced positional correspondence}
After an alignment \(A\) has been fixed, it determines which original positions are placed in the same alignment column. For a position \(p\) in \(S_i\) and a position \(q\) in \(S_j\), we write
\[
(i,p)\leftrightarrow_A (j,q)
\]
if, after forming the alignment \(A\), the position \(p\) of \(S_i\) and the position \(q\) of \(S_j\) occupy the same alignment column. This notation refers only to positional correspondence induced by the alignment; it does not require the two symbols to be identical.

\subsection{Two score components}
The model assigns two scores to an alignment.
The first score, \(f_{\mathrm{seq}}\), evaluates the alignment of the symbols themselves: symbols placed in the same column should be compatible, and inserted gaps should be penalized. 
The second score, \(f_{\mathrm{str}}\), evaluates how the designated position-pairs are aligned across strings: designated pairs from different strings are rewarded when they are placed on the same two alignment columns.

Thus, each alignment \(A\) is associated with the score pair
\[
\bigl(f_{\mathrm{seq}}(A), f_{\mathrm{str}}(A)\bigr).
\]
The two scores are defined separately below.

\subsection{Pair-overlap score}\label{sec:str}
We now define \(f_{\mathrm{str}}\), using the designated position-pair sets and the alignment-induced correspondence introduced above. The pair-overlap score counts designated position-pairs from different strings that are placed on the same two alignment columns.

Consider \((p,q)\in C_u\) and \((p',q')\in C_v\). These two designated pairs are counted if they are placed on the same two alignment columns, namely if
\[
(u,p)\leftrightarrow_A (v,p')
\quad\text{and}\quad
(u,q)\leftrightarrow_A (v,q').
\]

Formally, we define
\[
f_{\mathrm{str}}(A)=
\sum_{1\le u<v\le k}
\sum_{(p,q)\in C_u}
\sum_{(p',q')\in C_v}
\mathbf{1}\!\left[
(u,p)\leftrightarrow_A (v,p')
\ \wedge\
(u,q)\leftrightarrow_A (v,q')
\right],
\]
to be maximized.
Here, \(\mathbf{1}[\mathcal{E}]\) equals \(1\) if the condition \(\mathcal{E}\)
holds and \(0\) otherwise.

\subsection{Pairwise string score}\label{sec:seq}
We now define \(f_{\mathrm{seq}}\), the sequence-based score. This score has two parts: a score for pairs of non-gap symbols placed in the same alignment column, and a penalty for gaps introduced by the alignment.

We first specify how to score pairs of non-gap symbols. Let
\[
s:\Sigma\times\Sigma\to\mathbb{Z}
\]
be a fixed non-gap symbol-pair scoring rule. For symbols \(x,y\in\Sigma\), the integer \(s(x,y)\) is the score assigned when \(x\) and \(y\) are placed in the same alignment column. In biological sequence-alignment terminology, such a rule is often called a substitution score.

We next specify how to score gaps. Gap penalties distinguish between starting a new gap block and extending it. For each pair of input strings \(S_u\) and \(S_v\), we consider the two rows of the alignment matrix \(A\) corresponding to these strings, ignoring columns in which both rows contain gaps. Gap runs are consecutive blocks of gaps in one row, opposite symbols from \(\Sigma\) in the other row, each taken as long as possible.
We denote by \(\mathcal{R}_{uv}(A)\) the set of such gap runs for the row pair \((u,v)\).

Each gap run \(r\in\mathcal{R}_{uv}(A)\) has length \(\ell_r\) and receives the affine penalty
\[
g_o+g_e\ell_r.
\]
Here \(g_o\le 0\) is the gap-opening penalty and \(g_e\le 0\) is the gap-extension penalty. Since \(f_{\mathrm{seq}}\) is maximized, gap penalties are represented as non-positive score contributions. Both parameters are fixed constants in the scoring scheme.

We now combine the non-gap symbol scores and the gap-run penalties to define \(f_{\mathrm{seq}}\). To write the formula, let \(a_{ic}\) denote the entry in row \(i\) and column \(c\) of the alignment matrix \(A\). Thus, \(a_{ic}\) is either a symbol from \(\Sigma\) or the gap symbol \(\gap\).

We define
\[
f_{\mathrm{seq}}(A)
=
\sum_{1\le u<v\le k}
\left(
\sum_{c=1}^{L}
\mathbf{1}\!\left[
a_{uc}\in\Sigma \ \wedge\ a_{vc}\in\Sigma
\right]
s(a_{uc},a_{vc})
+
\sum_{r\in\mathcal{R}_{uv}(A)}
\left(g_o+g_e\ell_r\right)
\right),
\]
to be maximized.
In the first inner sum, the factor 
\(\mathbf{1}[a_{uc}\in\Sigma \wedge a_{vc}\in\Sigma]\) ensures that \(s(a_{uc},a_{vc})\) is added only when both entries are non-gap symbols.
The second inner sum adds the affine penalties for all gap runs between the two rows.

\subsection{Decision and optimization problems}\label{sec:dec}
Having defined the two score components, we now define the associated decision and optimization problems. The decision version asks whether there exists an alignment whose two score components both meet prescribed thresholds. The optimization version combines the two scores into a single scalarized objective.

\begin{definition}[\MSAS]
Given strings \(S_1,\ldots,S_k\), designated position-pair sets \(C_1,\ldots,C_k\), and integer score thresholds
\(\tau_{\mathrm{seq}}\) and \(\tau_{\mathrm{str}}\), decide whether there exists
an alignment \(A\) of the given strings such that
\[
f_{\mathrm{seq}}(A)\ge\tau_{\mathrm{seq}}
\qquad\text{and}\qquad
f_{\mathrm{str}}(A)\ge\tau_{\mathrm{str}}.
\]
\end{definition}

\begin{definition}[\MSAOPTlambda]
For a fixed rational constant \(\lambda\ge 1\), given strings
\(S_1,\ldots,S_k\) and designated position-pair sets \(C_1,\ldots,C_k\), maximize
\[
f_{\mathrm{seq}}(A)+\lambda f_{\mathrm{str}}(A)
\]
over all alignments \(A\) of the given strings.
\end{definition}

\subsection{Instance encoding and NP-membership}\label{sec:norm}
The scoring scheme is fixed throughout the paper. The alphabet \(\Sigma\), the non-gap symbol-pair scoring rule \(s\), and the gap parameters \(g_o,g_e\) are treated as absolute constants and are not part of the input. For \(\MSAOPTlambda\), the parameter \(\lambda\ge 1\) is also treated as a fixed rational constant.

Unless explicitly stated otherwise, the number of strings \(k\) is part of the input.
An instance of the decision problem \(\MSAS\) consists of
\[
(S_1,\ldots,S_k;\ C_1,\ldots,C_k;\ \tau_{\mathrm{seq}},\tau_{\mathrm{str}}),
\]
where \(S_1,\ldots,S_k\) are strings, each
\[
C_i\subseteq\{(p,q):1\le p<q\le L_i\}
\]
is a designated position-pair set represented by listing its pairs of indices,
and \(\tau_{\mathrm{seq}},\tau_{\mathrm{str}}\) are integer score thresholds.

An instance of \(\MSAOPTlambda\) consists of
\[
(S_1,\ldots,S_k;\ C_1,\ldots,C_k),
\]
with the same convention for the designated position-pair sets. The scalarizing parameter \(\lambda\) is fixed in advance and is not part of the input.

\paragraph{NP-membership.}
The decision problem \(\MSAS\) belongs to \(\mathsf{NP}\). Indeed, a certificate is an alignment \(A\) of the given strings. Since all-gap columns are excluded, every alignment column contains at least one symbol from some input string, and hence the alignment length satisfies
\[
L\le \sum_{i=1}^k L_i.
\]
Thus \(A\) has polynomial size.

Given \(A\), the score \(f_{\mathrm{seq}}(A)\) can be evaluated in polynomial time by scanning the alignment columns and the gap runs for each pair of rows.
The score \(f_{\mathrm{str}}(A)\) can also be evaluated in polynomial time by checking, for the listed pairs in the sets \(C_i\), whether the corresponding positions are placed in the same alignment columns. Therefore the two threshold conditions in \(\MSAS\) can be verified in polynomial time. Hence \(\MSAS\in\mathsf{NP}\).

\section{NP-completeness by restriction}\label{sec:npc}
This section proves NP-completeness of \MSAS\ by a restriction transfer from the sequence-only alignment problem \textsf{Seq-MSA}. The source problem uses only the sequence score \(f_{\mathrm{seq}}\). The target problem is the decision problem \MSAS, whose instances additionally include designated position-pair sets and the pair-overlap score \(f_{\mathrm{str}}\).

The main reduction is by restriction. Given a \textsf{Seq-MSA} instance, we construct an \MSAS\ instance on the same strings by setting all designated position-pair sets to be empty and setting the pair-overlap score threshold \(\tau_{\mathrm{str}}\) to zero.
Then \(f_{\mathrm{str}}\) is identically zero for every alignment, so feasibility is exactly the same as in the original sequence-only instance.

We assume that the fixed scoring scheme defining \(f_{\mathrm{seq}}\) in Section~\ref{sec:model} is chosen from a class covered by Just's NP-hardness result for multiple alignment with sum-of-pairs (SP) score~\cite{Just2001}.

\begin{definition}[\textsf{Seq-MSA}]
An instance of \textsf{Seq-MSA} consists of
\[
(S_1,\ldots,S_k;\ \tau_{\mathrm{seq}}),
\]
where \(S_1,\ldots,S_k\) are strings and \(\tau_{\mathrm{seq}}\) is an integer score threshold. The problem asks whether there exists an alignment \(A\) of the given strings such that
\[
f_{\mathrm{seq}}(A)\ge \tau_{\mathrm{seq}}.
\]
\end{definition}

\begin{theorem}[Restriction transfer]\label{thm:msas-npcomplete}
Under the fixed scoring scheme of Section~\ref{sec:model}, if \textsf{Seq-MSA} is NP-hard, then \MSAS\ is NP-complete.
\end{theorem}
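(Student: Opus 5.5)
The proof splits NP-completeness into membership and hardness. Membership is already established in the NP-membership paragraph of Section~\ref{sec:norm}. There, the certificate is an alignment \(A\) of length \(L\le\sum_i L_i\), and both \(f_{\mathrm{seq}}(A)\) and \(f_{\mathrm{str}}(A)\) can be evaluated in polynomial time. I would simply cite that argument. It remains to give a polynomial-time many-one reduction from \textsf{Seq-MSA} to \MSAS\ under the same fixed scoring scheme.

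The reduction is the restriction map described before the theorem. A \textsf{Seq-MSA} instance \((S_1,\ldots,S_k;\tau_{\mathrm{seq}})\) is mapped to the \MSAS\ instance
\[
(S_1,\ldots,S_k;\ \emptyset,\ldots,\emptyset;\ \tau_{\mathrm{seq}},\ 0).
\]
The map copies the strings and the threshold and adds \(k\) empty lists plus the constant \(0\). Its output therefore has size linear in the input and is computable in polynomial time. The scoring scheme \((s,g_o,g_e)\) is fixed and is not part of the input, so both problems use literally the same function \(f_{\mathrm{seq}}\).

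Correctness rests on two observations.

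First, the two instances have exactly the same set of alignments. Alignments are defined solely by the strings \(S_1,\ldots,S_k\), with all-gap columns excluded, and the designated sets play no role in that definition.

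Second, \(f_{\mathrm{str}}(A)=0\) for every alignment \(A\) when every \(C_i=\emptyset\). The innermost sums in the definition of \(f_{\mathrm{str}}\) range over \(C_u\) and \(C_v\), so they are empty. Consequently the condition \(f_{\mathrm{str}}(A)\ge 0=\tau_{\mathrm{str}}\) holds for every \(A\).

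Combining the two observations, an alignment \(A\) witnesses the \MSAS\ instance if and only if \(f_{\mathrm{seq}}(A)\ge\tau_{\mathrm{seq}}\). That is exactly the condition for \(A\) to witness the \textsf{Seq-MSA} instance. Hence the \MSAS\ instance is a yes-instance if and only if the \textsf{Seq-MSA} instance is.

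Given the hypothesis that \textsf{Seq-MSA} is NP-hard, this reduction makes \MSAS\ NP-hard. Together with membership, \MSAS\ is NP-complete.

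No step is a genuine obstacle. The only point requiring care is confirming that \(f_{\mathrm{seq}}\) in \MSAS\ coincides with the score used in \textsf{Seq-MSA}: the gap-run convention (ignoring columns where both rows are gaps) and the exclusion of all-gap columns must be identical in both problems. Both problems are defined over the same Section~\ref{sec:model} objects, so this is immediate.
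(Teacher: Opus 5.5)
Your proposal is correct and follows the paper's proof essentially step for step. You cite the NP-membership argument of Section~\ref{sec:norm} and map a \textsf{Seq-MSA} instance to the \MSAS\ instance with every \(C_i=\varnothing\) and \(\tau_{\mathrm{str}}=0\). You then observe that \(f_{\mathrm{str}}\equiv 0\) while the alignment space and \(f_{\mathrm{seq}}\) are unchanged, so feasibility is preserved exactly.
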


\begin{proof}
By the NP-membership argument in Section~\ref{sec:norm}, \(\MSAS\in\NP\).

For NP-hardness, we reduce from \textsf{Seq-MSA}. Given an instance
\[
(S_1,\ldots,S_k;\ \tau_{\mathrm{seq}})
\]
of \textsf{Seq-MSA}, construct the \MSAS\ instance
\[
(S_1,\ldots,S_k;\ \varnothing,\ldots,\varnothing;\ \tau_{\mathrm{seq}},0).
\]
That is, the strings are unchanged, every designated position-pair set is empty, and the two score thresholds are \(\tau_{\mathrm{seq}}\) and \(0\).

Since all designated position-pair sets are empty, there are no pair-overlap contributions. Hence, for every alignment \(A\),
\[
f_{\mathrm{str}}(A)=0.
\]
Therefore the second threshold condition
\[
f_{\mathrm{str}}(A)\ge 0
\]
is always satisfied, while the first threshold condition remains exactly
\[
f_{\mathrm{seq}}(A)\ge \tau_{\mathrm{seq}}.
\]
Thus the constructed \MSAS\ instance is feasible if and only if the original \textsf{Seq-MSA} instance is feasible.

The construction is polynomial-time computable and does not change the alignment space or the sequence score \(f_{\mathrm{seq}}\). Therefore \MSAS\ is NP-hard.
Together with \(\MSAS\in\NP\), this proves NP-completeness.
\end{proof}

\begin{corollary}\label{cor:just}
Suppose the fixed scoring scheme of Section~\ref{sec:model} belongs to the broad class covered by Just's Theorem~1(a)~\cite{Just2001}. 
Then \textsf{Seq-MSA} is NP-hard, and hence \MSAS\ is NP-complete.
\end{corollary}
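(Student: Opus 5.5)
The corollary follows from Theorem~\ref{thm:msas-npcomplete} once we know that \textsf{Seq-MSA} is NP-hard under the fixed scheme. So the plan has two steps. First, identify \textsf{Seq-MSA}, as defined above, with the SP-score multiple alignment decision problem treated by Just~\cite{Just2001}, and invoke his Theorem~1(a) to obtain NP-hardness. Second, apply Theorem~\ref{thm:msas-npcomplete} directly to conclude that \MSAS\ is NP-complete. The second step needs no further work, because the restriction reduction there (empty designated position-pair sets, \(\tau_{\mathrm{str}}=0\)) is already established for an arbitrary fixed scheme.

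The substantive step is the identification in the first step. I would check three things.
\begin{enumerate}
\item The objective \(f_{\mathrm{seq}}\) from Section~\ref{sec:seq} is exactly a sum-of-pairs score. It sums, over all row pairs \(u<v\), the non-gap symbol-pair scores \(s(a_{uc},a_{vc})\) together with affine penalties \(g_o+g_e\ell_r\) on maximal gap runs of the induced pairwise alignment. Columns where both rows contain gaps are dropped, which is the standard projected pairwise alignment, matching Just's SP definition.
\item Just phrases the problem as cost minimization, whereas \(f_{\mathrm{seq}}\) is maximized with \(g_o,g_e\le 0\). I would negate: let \(d=-s\), set the gap costs to \(-g_o,-g_e\ge 0\), and replace the threshold by \(-\tau_{\mathrm{seq}}\). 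Then ``score \(\ge \tau_{\mathrm{seq}}\)'' becomes ``cost \(\le -\tau_{\mathrm{seq}}\)'', so the two decision problems coincide instance by instance.
\item Instance conventions must agree. The alphabet and scoring are fixed constants, \(k\) and the strings are part of the input, the threshold is encoded in binary, and all-gap columns are excluded. Excluding all-gap columns does not change any pairwise projection, so it does not affect the score.
\end{enumerate}

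The main obstacle is making precise that the hypothesis ``the scheme belongs to the class covered by Just's Theorem~1(a)'' is stated in Just's own terms: conditions on the (metric-type) substitution matrix and the gap-cost function over a fixed alphabet. I would therefore state explicitly that the hypothesis is interpreted through the sign-flipped translation above. Under that reading, Just's theorem yields NP-hardness of the cost version, and hence of \textsf{Seq-MSA}, by an identity reduction that only rewrites the threshold.

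With NP-hardness of \textsf{Seq-MSA} in hand, Theorem~\ref{thm:msas-npcomplete} immediately gives that \MSAS\ is NP-complete.
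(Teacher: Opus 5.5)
Your route is the same as the paper's: translate Just's minimization result into the maximization convention by negating the scores and the threshold, conclude that \textsf{Seq-MSA} is NP-hard, and then apply Theorem~\ref{thm:msas-npcomplete}. You are right that this last step needs nothing beyond the fixed scheme.

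The one step the paper makes explicit and you skip is the passage from Just's result to the \emph{threshold} problem \textsf{Seq-MSA}. Your ``identity reduction that only rewrites the threshold'' assumes that Just's Theorem~1(a) is already a statement about a decision problem with a threshold. The paper instead treats it as hardness of the SP-score alignment problem and closes the gap explicitly. Because the scheme is fixed and all-gap columns are excluded (so $L\le\sum_i L_i$), every value of $f_{\mathrm{seq}}$ is an integer of polynomially bounded absolute value. Hence a polynomial-time algorithm for \textsf{Seq-MSA} would find the optimum by binary search. Your item~3 already contains these ingredients, so add this sentence rather than relying on the identification alone.

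There is a smaller imprecision in item~1. You say the affine run penalties $g_o+g_e\ell_r$ match Just's SP definition. The paper instead justifies the affine part through Just's remark that his hardness result is unaffected by adding gap-opening penalties. That remark is the precise thing to cite under your reading of the hypothesis.
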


\begin{proof}
Just proves NP-hardness of multiple alignment with sum-of-pairs (SP) score for a broad class of scoring matrices~\cite[Theorem~1(a)]{Just2001}, and notes that the result is unaffected by adding gap-opening penalties~\cite{Just2001}.

Just's result is stated in a minimization convention. Since minimizing a score is equivalent to maximizing its negation, with the threshold changed accordingly, the same hardness result applies to the maximization convention used here.

Since alignment length is polynomially bounded and all scoring parameters are fixed constants, all sequence scores are integer-valued and polynomially bounded in absolute value. Hence a polynomial-time algorithm for \textsf{Seq-MSA} would allow the optimum sequence score to be found by binary search over the feasible score range.
Thus \textsf{Seq-MSA} is NP-hard, and the claim follows from Theorem~\ref{thm:msas-npcomplete}.
\end{proof}

\begin{remark}[Nonempty pair-set strengthening]
The restriction-transfer proof above uses empty designated position-pair sets.
We next show that this restriction is not essential: NP-hardness persists even when every \(C_i\) is nonempty and \(\tau_{\mathrm{str}}>0\).
\end{remark}

\begin{lemma}[Nonempty designated-pair restriction does not remove NP-hardness]
\label{lem:nonempty-pairs}
Under the same fixed scoring scheme as above, \MSAS\ remains NP-hard even when each \(C_i\neq\varnothing\) and \(\tau_{\mathrm{str}}>0\).
\end{lemma}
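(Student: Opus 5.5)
We reduce from \textsf{Seq-MSA}, as in Theorem~\ref{thm:msas-npcomplete}, but attach a small suffix gadget to every string. The gadget carries a nonempty designated pair in each string and forces, through a large structural threshold, a rigid alignment of the gadget. The sequence score then changes only by a known additive constant.

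\emph{Construction.} Fix any symbol \(a\in\Sigma\). Let \((S_1,\ldots,S_k;\tau_{\mathrm{seq}})\) be an instance of \textsf{Seq-MSA}; we may assume \(k\ge 2\), since otherwise the instance is trivial. Define \(S_i' = S_i\,a\,a\), so that \(|S_i'| = L_i+2\), and set
\[
C_i=\{(L_i+1,\,L_i+2)\}\neq\varnothing .
\]
Let \(\Delta = 2\binom{k}{2}s(a,a)\), and choose the thresholds
\[
\tau_{\mathrm{seq}}'=\tau_{\mathrm{seq}}+\Delta,
\qquad
\tau_{\mathrm{str}}'=\binom{k}{2}\ge 1 .
\]
Since \(\Sigma\), \(s\), \(g_o\) and \(g_e\) are fixed constants, this construction is computable in polynomial time.

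\emph{Forward direction.} Let \(A\) be an alignment of \(S_1,\ldots,S_k\) with \(f_{\mathrm{seq}}(A)\ge\tau_{\mathrm{seq}}\). Append two columns, each consisting entirely of \(a\)'s, to obtain an alignment \(A'\). Each row pair gains exactly \(2s(a,a)\) from symbol pairs in the new columns. The new columns contain no gaps, so every gap run of \(A\) is still a maximal run in \(A'\) with the same length, and no new run is created. Hence
\[
f_{\mathrm{seq}}(A')=f_{\mathrm{seq}}(A)+\Delta .
\]
For every pair \(u<v\), the designated pairs of \(S_u'\) and \(S_v'\) lie on the same two columns, so \(f_{\mathrm{str}}(A')=\binom{k}{2}\).

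\emph{Backward direction.} This is the step that needs care, because one must rule out alignments that misplace the gadget to gain sequence score. The large structural threshold handles this. Each \(C_i\) is a singleton, so \(f_{\mathrm{str}}(A')\le\binom{k}{2}\). Therefore \(f_{\mathrm{str}}(A')\ge\binom{k}{2}\) forces every pair \(u<v\) to contribute. That means all positions \(L_i+1\) share one column \(c_1\), and all positions \(L_i+2\) share one column \(c_2>c_1\). Every row has no symbol strictly between \(c_1\) and \(c_2\), and none after \(c_2\). Since all-gap columns are excluded, \(c_1\) and \(c_2\) are exactly the last two columns, and both consist entirely of \(a\)'s.

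Delete these two columns to obtain an alignment \(A\) of \(S_1,\ldots,S_k\). By the same gap-run argument as in the forward direction, \(f_{\mathrm{seq}}(A')=f_{\mathrm{seq}}(A)+\Delta\). Hence \(f_{\mathrm{seq}}(A)\ge\tau_{\mathrm{seq}}\).

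The two instances are therefore equivalent. Combining this reduction with the NP-hardness of \textsf{Seq-MSA} (Corollary~\ref{cor:just}) yields NP-hardness of \MSAS\ under the restriction that every \(C_i\neq\varnothing\) and \(\tau_{\mathrm{str}}>0\).
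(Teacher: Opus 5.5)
Your proposal is correct and follows the paper's proof essentially step for step. It uses the same $aa$ suffix gadget with a singleton designated pair, the same thresholds $\binom{k}{2}$ and $\tau_{\mathrm{seq}}+2\binom{k}{2}s(a,a)$, and the same saturation argument forcing the appended positions into the final two gap-free columns; your explicit gap-run invariance check and your handling of $k=1$ are slightly more careful than the paper's wording but do not change the argument.
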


\begin{proof}
We reduce from \textsf{Seq-MSA} instances with \(k\ge 2\), which is sufficient for NP-hardness. Let the input instance be
\[
(S_1,\ldots,S_k;\ \tau_{\mathrm{seq}}).
\]
For each \(i\), let \(L_i\) be the length of \(S_i\).
Fix any symbol \(x\in\Sigma\). We append the two-symbol suffix \(xx\) to each string and define
\[
S'_i := S_i\,\Vert\,xx \qquad (i=1,\ldots,k),
\]
where \(\Vert\) denotes concatenation. Thus the last two positions of \(S'_i\) are \(L_i+1\) and \(L_i+2\), and both contain the symbol \(x\).
We designate exactly this appended pair by setting
\[
C_i:=\{(L_i+1,L_i+2)\}.
\]
Here \(\binom{k}{2}\) is the number of pairs of distinct strings. We set
\[
\tau_{\mathrm{str}}:=\binom{k}{2},
\qquad
\tau'_{\mathrm{seq}}
:=
\tau_{\mathrm{seq}}+2\binom{k}{2}s(x,x).
\]
The term \(2\binom{k}{2}s(x,x)\) is the total sequence-score contribution of the two appended all-\(x\) columns: each such column contributes \(s(x,x)\) for each of the \(\binom{k}{2}\) pairs of rows.

(\emph{Completeness})
Suppose the original \textsf{Seq-MSA} instance is feasible, and let \(A\) be an alignment with
\[
f_{\mathrm{seq}}(A)\ge\tau_{\mathrm{seq}}.
\]
Append two columns to \(A\), each containing \(x\) in every row. The resulting alignment \(A'\) aligns the two appended positions of every string in the same two columns. Therefore
\[
f_{\mathrm{seq}}(A')
=
f_{\mathrm{seq}}(A)+2\binom{k}{2}s(x,x)
\ge
\tau'_{\mathrm{seq}}.
\]
Moreover, for every pair of strings \(u<v\), the unique designated pair in \(C_u\) and the unique designated pair in \(C_v\) are placed on the same two alignment columns. Hence each pair of strings contributes one unit to \(f_{\mathrm{str}}\), and
\[
f_{\mathrm{str}}(A')=\binom{k}{2}=\tau_{\mathrm{str}}.
\]
Thus the constructed \MSAS\ instance is feasible.

(\emph{Soundness})
Conversely, suppose the constructed \MSAS\ instance has a feasible alignment \(A'\). Since each string has exactly one designated pair, each pair of strings can contribute at most one unit to the pair-overlap score. Hence
\[
f_{\mathrm{str}}(A')\le \binom{k}{2}.
\]
Because feasibility requires
\[
f_{\mathrm{str}}(A')\ge \tau_{\mathrm{str}}=\binom{k}{2},
\]
equality must hold. Therefore, for every pair of strings, their unique designated pairs are placed on the same two alignment columns.

These designated pairs are precisely the two appended positions of each string.
Since alignments preserve the order of symbols within each row, all strings must place their first appended \(x\) in one common column and their second appended \(x\) in another common column, in the same order. 
Since these are the final two symbols in every row, no non-gap symbol can appear after the second common column. Because all-gap columns are excluded, these two common columns are the final two columns of the alignment. Neither column contains a gap.

Delete these two columns from \(A'\). The result is an alignment \(A\) of the original strings \(S_1,\ldots,S_k\). Deleting the final two all-\(x\) columns removes exactly the sequence-score contribution \(2\binom{k}{2}s(x,x)\), and does not affect the remaining gap-run penalties.

Therefore
\[
f_{\mathrm{seq}}(A)
=
f_{\mathrm{seq}}(A')-2\binom{k}{2}s(x,x)
\ge
\tau_{\mathrm{seq}}.
\]
Hence the original \textsf{Seq-MSA} instance is feasible.

The construction is polynomial-time computable. Since \(k\ge 2\) for the source instances, the constructed instance satisfies \(C_i\neq\varnothing\) for every \(i\) and \(\tau_{\mathrm{str}}=\binom{k}{2}>0\). Hence \MSAS\ remains NP-hard under this restriction.
\end{proof}

\section{No PTAS for the scalarized optimization problem (already for $k=2$)}\label{sec:apx}
This section proves that the scalarized optimization problem \(\MSAOPTlambda\) admits no polynomial-time approximation scheme (PTAS), already for two input strings. The proof proceeds by showing that a PTAS for \(\MSAOPTlambda\) would imply a PTAS for degree-\(1\) contact-map overlap (CMO), contradicting the known hardness of approximation for degree-\(1\) CMO.

\paragraph{Source problem: degree-\(1\) CMO.}
The source problem is contact-map overlap (CMO). In this section, CMO plays the role of a simplified abstraction of the pair-overlap score defined in Section~\ref{sec:model}. In the \textsf{MSA-S} model, each string has a set of position-pairs, and the pair-overlap score rewards such pairs when they are mapped to the same two alignment columns. CMO retains only the order of positions and which pairs of positions are selected.

More concretely, each CMO input can be viewed as an ordered graph: the vertices are linearly ordered positions, and the edges are selected pairs of positions. This edge structure is called a contact map. The objective of CMO is to find an order-preserving partial correspondence between two ordered vertex sets that preserves as many edges as possible.

Formally, for a positive integer \(n\), let
\[
[n]=\{1,2,\ldots,n\}
\]
denote a linearly ordered set of \(n\) positions, and let \(\binom{[n]}{2}\) denote the set of unordered pairs of distinct positions in \([n]\). A contact map on \([n]\), namely the edge structure on these ordered positions, is represented by a set
\[
D\subseteq \binom{[n]}{2}.
\]
Equivalently, \(D\) is the edge set of an undirected graph whose vertex set is the ordered position set \([n]\). An element \(\{i,i'\}\in D\) is an edge, meaning that the two positions \(i\) and \(i'\) of \([n]\) form a contact.

Given two contact maps
\[
D_1\subseteq \binom{[n]}{2},
\qquad
D_2\subseteq \binom{[m]}{2},
\]
a feasible CMO solution first selects subsets \(S\subseteq[n]\) and \(S'\subseteq[m]\) with \(|S|=|S'|\). It then matches each selected position \(i\in S\) to a selected position \(f(i)\in S'\), giving a bijection \(f:S\to S'\). This bijection must be order-preserving: for any \(i,i'\in S\), if \(i<i'\), then \(f(i)<f(i')\).

The value of this solution is the number of preserved edges, namely the number of edges \(\{i,i'\}\in D_1\) with \(i,i'\in S\) and \(\{f(i),f(i')\}\in D_2\).
We write
\[
\mathrm{CMO}(D_1,D_2)
\]
for the maximum value over all feasible CMO solutions.

We will use the restriction called degree-\(1\) CMO. In this restriction, each position is incident to at most one edge in each contact map. In other words, no position can participate in two different contacts. Equivalently, the undirected graph associated with each contact map has maximum degree \(1\). 
Thus each contact map is a matching.

This restricted problem is \(\MAXSNP\)-hard, indeed \(\MAXSNP\)-complete~\cite{Goldman1999}. Therefore, by the PCP-based hardness-of-approximation consequence for \(\MAXSNP\)-hard problems~\cite{Arora1998}, degree-\(1\) CMO admits no PTAS unless \(\Ptime=\NP\).

\paragraph{Target problem and scoring convention.}
The target problem of the reduction is the two-string restriction of \(\MSAOPTlambda\), defined in Section~\ref{sec:model}.
Throughout this section, we use the canonical unit non-gap score: identical non-gap symbols have score \(1\), and distinct non-gap symbols have score \(0\).
The affine gap parameters remain the fixed constants from Section~\ref{sec:model}, and the pair-overlap score \(f_{\mathrm{str}}\) is kept unchanged from that definition.

The scalarizing parameter \(\lambda\ge 1\) is a fixed rational constant and is not part of the input. For an alignment \(A\), define
\begin{equation}\label{eq:apx:val-opt}
\mathrm{val}_\lambda(A)
:=
f_{\mathrm{seq}}(A)+\lambda f_{\mathrm{str}}(A),
\qquad
\mathrm{OPT}_\lambda
:=
\max_A \mathrm{val}_\lambda(A),
\end{equation}
where \(A\) ranges over all alignments of the two input strings.

\paragraph{High-level reduction idea.}
The purpose of the reduction is to transfer approximation hardness from degree-\(1\) CMO to \(\MSAOPTlambda\). Suppose that \(\MSAOPTlambda\) admitted a PTAS. We construct an \(\MSAOPTlambda\) instance from a degree-\(1\) CMO instance so that a near-optimal alignment of the constructed instance can be converted into a near-optimal CMO solution. This would imply a PTAS for degree-\(1\) CMO, contradicting its known hardness.

The main idea is to make the pair-overlap score dominate the sequence score.
Indeed, the pair-overlap score \(f_{\mathrm{str}}\) has the same combinatorial form as CMO: both count pairs that are preserved under an order-preserving correspondence. To suppress the effect of symbol matching, we construct the two strings as \(a^L\) and \(b^L\) for distinct symbols \(a\neq b\). Under the canonical unit non-gap score, every non-gap symbol pair between the two strings has score \(0\). Thus the sequence score can only contribute through affine gap penalties.

To make one preserved contact contribute many pair-overlap terms, each original CMO position is represented by a block consisting of \(Q\) consecutive positions in the constructed strings. Thus, an original position is represented not by a single position, but by a block of size \(Q\).

If two CMO positions form a contact, then the reduced instance designates all pairs between the two corresponding blocks. In other words, one original contact is represented by the \(Q^2\) pairs between two blocks of size \(Q\). 
When this contact is preserved by a CMO solution, aligning the corresponding blocks in
the reduced instance makes these \(Q^2\) designated pairs overlap.
Hence one preserved contact contributes \(Q^2\) to the pair-overlap score of the constructed \(\MSAOPTlambda\) instance.

The block size \(Q\) is chosen large enough so that this \(Q^2\)-scaled pair-overlap contribution dominates the bounded total loss caused by affine gap penalties.
Finally, we show that a near-optimal alignment can be decoded back into an order-preserving CMO solution.

\subsection{Reduction construction}\label{sec:apx-construction}
We now give the formal construction corresponding to the high-level idea above. Starting from a degree-\(1\) CMO instance, we first pad the ordered position sets to a common size, then replace each position by a block of size \(Q\), and finally expand each contact into all position-pairs between the corresponding blocks.

Fix a degree-\(1\) CMO instance
\[
D_1\subseteq \binom{[n]}{2},
\qquad
D_2\subseteq \binom{[m]}{2}.
\]
Let
\[
N:=\max\{n,m\}.
\]
By adding isolated positions to the smaller ordered position set if necessary, we regard both \(D_1\) and \(D_2\) as contact maps on the common ordered position set \([N]\). No new contacts are added. This padding preserves degree-\(1\) and does not change the CMO optimum.

For the sequence-score bound below, define
\[
c_{\mathrm{gap}}:=|g_o|+|g_e|.
\]
This fixed constant bounds the magnitude of the loss caused by affine gap penalties. Since the fixed affine gap scheme is assumed to be nontrivial, we have \(c_{\mathrm{gap}}>0\).

Let \(\varepsilon\in(0,1)\) be the target approximation error for the source CMO instance. We will use the hypothetical PTAS for \(\MSAOPTlambda\) with accuracy parameter
\[
\delta:=\varepsilon/2.
\]
The block size \(Q\) is chosen so that the \(Q^2\)-amplified pair-overlap score dominates the bounded sequence-score loss caused by affine gap penalties. Set
\begin{equation}\label{eq:apx:block-size}
Q:=
\left\lceil
\frac{4c_{\mathrm{gap}}N}{\lambda\varepsilon}
\right\rceil,
\qquad
L:=NQ.
\end{equation}

To eliminate any positive contribution from non-gap symbol matching, choose distinct symbols \(a,b\in\Sigma\), and construct two strings
\[
S^{(1)}:=a^L,
\qquad
S^{(2)}:=b^L.
\]
Since \(a\neq b\), every non-gap symbol pair between \(S^{(1)}\) and \(S^{(2)}\) has score \(0\) under the canonical unit non-gap score. Nevertheless, gaps can still affect the alignment because they change the position-to-position correspondence induced by alignment columns, and hence can change the pair-overlap score. Thus the sequence-score term contributes only through affine gap penalties.

Within each constructed string, define the block assigned to CMO position \(i\) by
\[
B_i:=\{(i-1)Q+1,(i-1)Q+2,\ldots,iQ\}
\qquad (i\in[N]).
\]
Thus \(B_i\) is a block of \(Q\) consecutive string positions.

\paragraph{Expanded designated position-pair sets.}
We now specify how the CMO contacts are represented as designated position-pairs in the reduced instance.

For each \(\nu\in\{1,2\}\), let
\[
\widehat C_\nu\subseteq \{(p,q):1\le p<q\le L\}
\]
be the designated position-pair set for the constructed string \(S^{(\nu)}\).
For every contact \(\{i,i'\}\in D_\nu\) with \(i<i'\), we include in
\(\widehat C_\nu\) every pair
\[
(p,q)
\quad\text{with}\quad
p\in B_i,\ q\in B_{i'}.
\]
Thus the endpoint \(i\) is replaced by a position \(p\) in \(B_i\), and the endpoint \(i'\) is replaced by a position \(q\) in \(B_{i'}\). Since both blocks have size \(Q\), one CMO contact gives \(Q^2\) designated position-pairs.
Since \(i<i'\), every such pair satisfies \(p<q\). No other designated position-pairs are added.

The reduced \(\MSAOPTlambda\) instance is therefore
\[
(S^{(1)},S^{(2)};\ \widehat C_1,\widehat C_2).
\]

\subsection{Basic bounds and embedding}
We next establish two basic properties of the reduced instance: the sequence score is uniformly bounded, and any CMO solution of value \(t\) can be embedded as an alignment with pair-overlap score \(Q^2t\).

\begin{lemma}[Uniform bound on the \(f_{\mathrm{seq}}\) term]
\label{lem:apx:fseq-bound}
For any alignment \(A\) of \((S^{(1)},S^{(2)})\),
\[
-2c_{\mathrm{gap}}L \le f_{\mathrm{seq}}(A)\le 0.
\]
\end{lemma}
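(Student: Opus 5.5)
Since $k=2$, $f_{\mathrm{seq}}(A)$ has only the single row pair $(1,2)$. I will split it into the non-gap symbol part and the gap-run part, and bound each separately.

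\emph{Upper bound.} Every column in which both rows hold non-gap symbols pairs an $a$ from $S^{(1)}=a^L$ with a $b$ from $S^{(2)}=b^L$. Since $a\neq b$, the canonical unit score gives $s(a,b)=0$, so the symbol part vanishes. Each gap run $r\in\mathcal{R}_{12}(A)$ contributes $g_o+g_e\ell_r$, which is at most $0$ because $g_o,g_e\le 0$ and $\ell_r\ge 1$. Hence $f_{\mathrm{seq}}(A)\le 0$.

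\emph{Lower bound.} Here $f_{\mathrm{seq}}(A)=\sum_{r}(g_o+g_e\ell_r)\ge -\sum_r(|g_o|+|g_e|\ell_r)$. I need two counting facts. First, every gap run consists of at least one column in which exactly one row has a gap and the other has a non-gap symbol. Distinct runs occupy disjoint columns, because they are maximal blocks in the (all-gap-column-free) two-row alignment. Each such column consumes one symbol of the other string, so $|\mathcal{R}_{12}(A)|\le\sum_r\ell_r$. Second, $\sum_r\ell_r$ counts the columns containing exactly one gap. Each such column uses one of the $2L$ symbols of the two strings, so $\sum_r\ell_r\le 2L$. Combining these gives
\[
\sum_r\bigl(|g_o|+|g_e|\ell_r\bigr)\le (|g_o|+|g_e|)\sum_r\ell_r\le 2c_{\mathrm{gap}}L,
\]
which yields $f_{\mathrm{seq}}(A)\ge -2c_{\mathrm{gap}}L$.

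The only step needing care is bounding the number of gap runs by their total length. This needs disjointness of runs and $\ell_r\ge1$, both of which follow from the definition of a gap run as a maximal nonempty block. A sharper bound of $2L$ on $\sum_r\ell_r$ is not needed, since the crude count suffices.
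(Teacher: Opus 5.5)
Your proof is correct and follows the paper's argument. The symbol part vanishes since $s(a,b)=0$, each run satisfies $g_o+g_e\ell_r\ge -c_{\mathrm{gap}}\ell_r$, and $\sum_r\ell_r\le 2L$. One small point: your global count $|\mathcal{R}_{12}(A)|\le\sum_r\ell_r$ follows from $\ell_r\ge 1$ alone, so disjointness of runs is not needed there; it is actually needed for the $2L$ count.
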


\begin{proof}
Since \(a\neq b\), every non-gap symbol pair between \(S^{(1)}\) and \(S^{(2)}\) has score \(s(a,b)=0\) under the canonical unit non-gap score.
Thus the only possible contributions to \(f_{\mathrm{seq}}(A)\) come from affine gap penalties.

Because \(g_o\le 0\) and \(g_e\le 0\), every gap-run contribution
\(g_o+g_e\ell_r\) is non-positive. Hence
\[
f_{\mathrm{seq}}(A)\le 0.
\]

For the lower bound, let \(r\) range over the gap runs in the two rows of \(A\), and let \(\ell_r\ge 1\) be the length of run \(r\). By the definition \(c_{\mathrm{gap}}=|g_o|+|g_e|\), each gap run satisfies
\[
g_o+g_e\ell_r
=
-|g_o|-|g_e|\ell_r
\ge
-(|g_o|+|g_e|)\ell_r
=
-c_{\mathrm{gap}}\ell_r,
\]
where we use \(\ell_r\ge 1\). Therefore
\[
f_{\mathrm{seq}}(A)
\ge
-c_{\mathrm{gap}}\sum_r \ell_r.
\]
In an alignment of two strings of length \(L\), the total number of gap symbols is at most \(2L\). Hence \(\sum_r\ell_r\le 2L\), and therefore
\[
f_{\mathrm{seq}}(A)\ge -2c_{\mathrm{gap}}L.
\]
\end{proof}

\begin{lemma}[Embedding a CMO solution]\label{lem:apx:embed}
Let \(t\) be the value of a feasible CMO solution for \((D_1,D_2)\).
Then there exists an alignment \(A\) of the reduced instance such that
\[
f_{\mathrm{str}}(A)=Q^2 t
\qquad\text{and}\qquad
\mathrm{val}_\lambda(A)\ge \lambda Q^2 t - 2c_{\mathrm{gap}}L.
\]
\end{lemma}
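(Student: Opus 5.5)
Starting from a feasible CMO solution \((S,S',f)\) of value \(t\), I will write down an explicit alignment. Let \(S=\{i_1<\cdots<i_r\}\), so \(f(i_1)<\cdots<f(i_r)\). The goal is an alignment \(A\) of \(a^L\) and \(b^L\) in which, for each \(j\), the block \(B_{i_j}\) of \(S^{(1)}\) sits in exactly the same \(Q\) columns as the block \(B_{f(i_j)}\) of \(S^{(2)}\), position by position. That is, the \(h\)-th position of \(B_{i_j}\) shares a column with the \(h\)-th position of \(B_{f(i_j)}\).

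\textbf{Construction.} The alignment is built left to right as an interleaving. Between consecutive matched block pairs, and before the first and after the last, the unmatched blocks of row~1 are placed in columns opposite gaps in row~2, followed by the unmatched blocks of row~2 placed opposite gaps in row~1. Each matched block pair then fills \(Q\) columns with no gaps. Order preservation of \(f\) makes this a valid alignment: each row lists its blocks in increasing order, and no all-gap column arises. (Matching unmatched blocks against each other in a gapless way would also be allowed, but it is not needed.)

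\textbf{Computing \(f_{\mathrm{str}}\).} Take a preserved contact \(\{i,i'\}\in D_1\) with \(i<i'\), \(i,i'\in S\), and \(\{f(i),f(i')\}\in D_2\). Since \(f(i)<f(i')\), every designated pair \((p,q)\in B_i\times B_{i'}\) in \(\widehat C_1\) is matched with exactly one pair \((p',q')\in B_{f(i)}\times B_{f(i')}\) in \(\widehat C_2\). That partner is obtained by taking the same offsets, and it lies in the same two columns. This gives \(Q^2\) overlapping pairs per preserved contact, so at least \(Q^2t\) in total.

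For the reverse inequality, suppose a counted term has \((p,q)\in B_i\times B_{i'}\) in \(\widehat C_1\) and \((p',q')\) in \(\widehat C_2\) placed in the same columns as \(p\) and \(q\). By construction the only non-gap column partners lie in matched blocks, so \(i,i'\in S\), and \(p'\) and \(q'\) lie in \(B_{f(i)}\) and \(B_{f(i')}\) at the same offsets. Since \((p',q')\) is designated, \(\{f(i),f(i')\}\in D_2\). Degree-\(1\) is not even needed here, because each designated pair has a unique column partner. Hence each counted term arises from a preserved contact, and each contact yields exactly \(Q^2\) terms, giving \(f_{\mathrm{str}}(A)=Q^2t\).

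\textbf{Value bound and main obstacle.} Lemma~\ref{lem:apx:fseq-bound} gives \(f_{\mathrm{seq}}(A)\ge -2c_{\mathrm{gap}}L\), so \(\mathrm{val}_\lambda(A)\ge \lambda Q^2t-2c_{\mathrm{gap}}L\). The main obstacle is the exact-equality bookkeeping: the count must be exactly \(Q^2t\), with neither double counting nor spurious overlaps. The key facts that settle this are that column partnership is a partial bijection and that it respects block offsets.
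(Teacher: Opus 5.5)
Your proposal is correct and follows the paper's proof. You align each selected block $B_i$ positionwise with $B_{f(i)}$, place the remaining blocks against gaps, count exactly $Q^2$ overlaps per preserved contact, and invoke Lemma~\ref{lem:apx:fseq-bound} for the value bound; your explicit interleaving and unique-column-partner argument simply make rigorous the paper's brief ``no other contact contributes'' step.
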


\begin{proof}
Let \(f:S\to S'\) be an order-preserving bijection whose CMO value is \(t\).

We first construct an alignment \(A\) from this CMO solution. For each \(i\in S\), align the block \(B_i\) in \(S^{(1)}\) positionwise with the block \(B_{f(i)}\) in \(S^{(2)}\). The remaining blocks, namely blocks \(B_i\) in \(S^{(1)}\) with \(i\notin S\) and blocks \(B_j\) in \(S^{(2)}\) with \(j\notin S'\), are aligned against gaps. Since \(f\) is order-preserving, all these block alignments can be placed in order without crossings.

We next compute the pair-overlap score of this alignment. Consider a contact
\(\{i,i'\}\in D_1\) preserved by \(f\), so
\[
\{f(i),f(i')\}\in D_2.
\]
The contact \(\{i,i'\}\) is expanded into all \(Q^2\) designated pairs between \(B_i\) and \(B_{i'}\). Similarly, the contact \(\{f(i),f(i')\}\) is expanded into all \(Q^2\) designated pairs between \(B_{f(i)}\) and \(B_{f(i')}\).
Since the corresponding blocks are aligned positionwise, these \(Q^2\) pairs occupy the same pairs of alignment columns. Thus each preserved CMO contact contributes exactly \(Q^2\) to \(f_{\mathrm{str}}\).

No other contact of \(D_1\) contributes to \(f_{\mathrm{str}}\) in this constructed alignment. If at least one of its endpoint blocks is not selected by the CMO solution, then positions in that block are aligned only against gaps. If both endpoint blocks are selected but their images do not form a contact in \(D_2\), then there is no corresponding designated pair in \(\widehat C_2\).
Hence the only pair-overlap contributions come from the \(t\) contacts preserved by \(f\), and therefore
\[
f_{\mathrm{str}}(A)=Q^2t.
\]

Finally, by Lemma~\ref{lem:apx:fseq-bound},
\[
f_{\mathrm{seq}}(A)\ge -2c_{\mathrm{gap}}L.
\]
Hence
\[
\mathrm{val}_\lambda(A)
=
f_{\mathrm{seq}}(A)+\lambda f_{\mathrm{str}}(A)
\ge
-2c_{\mathrm{gap}}L+\lambda Q^2t.
\]
\end{proof}

As an immediate consequence, let
\[
t^\star:=\mathrm{CMO}(D_1,D_2)
\]
be the optimum value of the original CMO instance. Applying Lemma~\ref{lem:apx:embed} to an optimal CMO solution gives
\begin{equation}\label{eq:apx:opt-lower}
\mathrm{OPT}_\lambda
\ge
\lambda Q^2 t^\star - 2c_{\mathrm{gap}}L.
\end{equation}

We also note that the reduced instance has nonnegative optimum. Indeed, the gap-free alignment of \(S^{(1)}\) and \(S^{(2)}\) has \(f_{\mathrm{seq}}=0\), because every non-gap pair is \(a\) aligned with \(b\), and \(s(a,b)=0\). Since \(f_{\mathrm{str}}\ge 0\), this alignment has nonnegative value. Hence
\[
\mathrm{OPT}_\lambda\ge 0.
\]

If \(t^\star=0\), then the CMO instance is trivial for approximation purposes. For the nontrivial case \(t^\star\ge 1\), the choice of \(Q\) gives
\[
2c_{\mathrm{gap}}L
=
2c_{\mathrm{gap}}NQ
\le
\frac{\lambda\varepsilon}{2}Q^2
\le
\frac{\lambda\varepsilon}{2}Q^2t^\star.
\]
Combining this with \eqref{eq:apx:opt-lower}, we obtain
\begin{equation}\label{eq:apx:opt-lower-scaled}
\mathrm{OPT}_\lambda
\ge
\left(1-\frac{\varepsilon}{2}\right)\lambda Q^2t^\star.
\end{equation}

\subsection{Monotonicity and exact decomposition of the pair-overlap term}
We now analyze an arbitrary alignment \(A\) of the reduced instance. The goal is to express the pair-overlap score \(f_{\mathrm{str}}(A)\) in terms of correspondences between contacts of \(D_1\) and contacts of \(D_2\). This is the first step in the reverse direction: later, we will use this decomposition to extract a CMO solution from a near-optimal alignment.

We first define block-level correspondence counts. For \(i,j\in[N]\), let
\[
x_{ij}:=
\#\{(p,p'):\ p\in B_i,\ p'\in B_j,\ (1,p)\leftrightarrow_A (2,p')\}.
\]
Thus \(x_{ij}\) counts how many positions in block \(B_i\) of \(S^{(1)}\) are aligned with positions in block \(B_j\) of \(S^{(2)}\).

Since each block contains \(Q\) positions and each position is aligned to at most one position in the other string,
\[
\sum_{j=1}^N x_{ij}\le Q,
\qquad
\sum_{i=1}^N x_{ij}\le Q.
\]

The order-preserving property of alignments implies the following monotonicity.
If \(i<i'\) and \(j<j'\), then
\[
x_{ij'}x_{i'j}=0.
\]
Indeed, if both factors were positive, then there would be aligned pairs
\((p,p')\) and \((q,q')\) with
\[
p\in B_i,\quad p'\in B_{j'},\qquad
q\in B_{i'},\quad q'\in B_j.
\]
Since \(i<i'\), every position in \(B_i\) precedes every position in \(B_{i'}\), so \(p<q\). 
Since \(j<j'\), every position in \(B_j\) precedes every position in \(B_{j'}\), so \(q'<p'\). 
This contradicts the order-preserving property of alignments.

For contacts \(e=\{i,i'\}\in D_1\) and \(e'=\{j,j'\}\in D_2\), written with \(i<i'\) and \(j<j'\), define
\begin{equation}\label{eq:apx:omega}
\Omega_{e,e'}(A):=x_{ij}x_{i'j'}.
\end{equation}

\begin{lemma}[Exact decomposition of \(f_{\mathrm{str}}\)]
\label{lem:apx:fstr-decomposition}
For any alignment \(A\) of the reduced instance,
\[
f_{\mathrm{str}}(A)
=
\sum_{e\in D_1}\sum_{e'\in D_2}\Omega_{e,e'}(A).
\]
\end{lemma}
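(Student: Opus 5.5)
We prove the identity by expanding $f_{\mathrm{str}}(A)$ from its definition for $k=2$ and regrouping the terms by contact pairs. With $u=1$ and $v=2$, the definition gives
\[
f_{\mathrm{str}}(A)=\sum_{(p,q)\in\widehat C_1}\sum_{(p',q')\in\widehat C_2}\mathbf{1}\!\left[(1,p)\leftrightarrow_A(2,p')\wedge(1,q)\leftrightarrow_A(2,q')\right].
\]

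The first step is to partition $\widehat C_1$ by contacts. We claim that every $(p,q)\in\widehat C_1$ arises from exactly one contact $e=\{i,i'\}\in D_1$ with $i<i'$, $p\in B_i$ and $q\in B_{i'}$. The blocks $B_1,\ldots,B_N$ partition $[L]$, so the contact is determined uniquely by the block indices of $p$ and $q$. It follows that $\widehat C_1$ is the disjoint union over $e\in D_1$ of $B_i\times B_{i'}$. The same argument partitions $\widehat C_2$ into the sets $B_j\times B_{j'}$, one for each $e'=\{j,j'\}\in D_2$ with $j<j'$. Substituting these partitions, the double sum becomes a sum over pairs $(e,e')\in D_1\times D_2$ of inner sums
\[
\sum_{p\in B_i,\,q\in B_{i'}}\ \sum_{p'\in B_j,\,q'\in B_{j'}}\mathbf{1}\!\left[(1,p)\leftrightarrow_A(2,p')\right]\mathbf{1}\!\left[(1,q)\leftrightarrow_A(2,q')\right].
\]

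The second step is to factor this inner sum. The indicator is a product of one factor depending only on $(p,p')$ and another depending only on $(q,q')$. The four summation ranges are independent, since they are a Cartesian product $B_i\times B_{i'}\times B_j\times B_{j'}$. The inner sum therefore splits as
\[
\Bigl(\sum_{p\in B_i,\,p'\in B_j}\mathbf{1}[(1,p)\leftrightarrow_A(2,p')]\Bigr)\Bigl(\sum_{q\in B_{i'},\,q'\in B_{j'}}\mathbf{1}[(1,q)\leftrightarrow_A(2,q')]\Bigr)=x_{ij}\,x_{i'j'},
\]
which is $\Omega_{e,e'}(A)$ by the definitions of $x_{ij}$ and \eqref{eq:apx:omega}. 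Summing over all pairs $(e,e')$ gives the lemma.

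The main point needing care is the bookkeeping in the first step: each designated pair must be counted exactly once. This relies on three facts. Blocks are disjoint. Each contact is oriented as $i<i'$, so $(p,q)$ with $p<q$ identifies the ordered pair of blocks. Distinct contacts give disjoint block products. Degree-$1$ is not even needed here, since distinct unordered contacts $\{i,i'\}$ already yield distinct ordered block pairs.

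The endpoint orientation also has to match. The definition of $f_{\mathrm{str}}$ requires $p\leftrightarrow p'$ and $q\leftrightarrow q'$, pairing the smaller endpoints with each other and the larger with each other. This is exactly why $\Omega_{e,e'}$ uses $x_{ij}x_{i'j'}$ rather than $x_{ij'}x_{i'j}$.

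The monotonicity property is not needed for this identity; it will be used later, when a CMO solution is extracted from a near-optimal alignment.
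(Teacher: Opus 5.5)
Your proposal is correct and follows essentially the same route as the paper: regroup the defining double sum of $f_{\mathrm{str}}$ by contact pairs $(e,e')\in D_1\times D_2$ and count each contribution as $x_{ij}x_{i'j'}$. You are also right that the monotonicity argument, which the paper invokes to discard the reversed term $x_{ij'}x_{i'j}$, is not needed here, because the ordered-pair definition of $f_{\mathrm{str}}$ never tests the crossed correspondence; your explicit check that distinct contacts give disjoint block products fills in a bookkeeping step the paper leaves implicit.
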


\begin{proof}
Fix contacts \(e=\{i,i'\}\in D_1\) and \(e'=\{j,j'\}\in D_2\), written with \(i<i'\) and \(j<j'\). By construction, \(e\) is expanded into all designated pairs between \(B_i\) and \(B_{i'}\) in \(\widehat C_1\), and \(e'\) is expanded into all designated pairs between \(B_j\) and \(B_{j'}\) in \(\widehat C_2\).

We count the contribution to \(f_{\mathrm{str}}\) coming from this fixed pair of contacts. A pair-overlap between these two expanded contacts is obtained by choosing one aligned position pair between \(B_i\) and \(B_j\), and one aligned position pair between \(B_{i'}\) and \(B_{j'}\). The number of such choices is
\[
x_{ij}x_{i'j'}.
\]
The reversed correspondence, namely \(B_i\) to \(B_{j'}\) and \(B_{i'}\) to \(B_j\), would contribute \(x_{ij'}x_{i'j}\), but this term is zero by the monotonicity argument above. Hence the contribution of the contact pair \((e,e')\) is exactly
\[
x_{ij}x_{i'j'}=\Omega_{e,e'}(A).
\]

Summing over all contacts \(e\in D_1\) and \(e'\in D_2\) gives the result.
\end{proof}

\subsection{Fractional matching extraction}
We now reinterpret the decomposition above as a weighted bipartite graph on contacts. The left vertices are the contacts of \(D_1\), and the right vertices are the contacts of \(D_2\). An alignment \(A\) induces a weight between two contacts whenever it creates a positive pair-overlap contribution between them.

The reason for using fractional weights is that an arbitrary alignment need not match one contact of \(D_1\) integrally to one contact of \(D_2\). Since each CMO position has been expanded into a block of \(Q\) positions, a block may be split across several blocks on the other side. Consequently, one contact may contribute partially to several contacts. The weights below record these partial contact-level correspondences.

Formally, for an alignment \(A\), define the auxiliary bipartite graph
\[
H_A=(D_1,D_2,E_A)
\]
by
\[
(e,e')\in E_A
\quad\Longleftrightarrow\quad
\Omega_{e,e'}(A)>0.
\]
For \(e\in D_1\) and \(e'\in D_2\), define
\[
w_{e,e'}(A):=\frac{\Omega_{e,e'}(A)}{Q^2}.
\]
Thus \(w_{e,e'}(A)=0\) whenever \((e,e')\notin E_A\).

The normalization by \(Q^2\) is chosen because each contact is expanded into exactly \(Q^2\) designated position-pairs, while an arbitrary alignment can realize at most \(Q^2\) total pair-overlap contribution incident to a single contact. Hence, after division by \(Q^2\), the total weight incident to any single contact is at most \(1\). In the next lemma, we show that these normalized weights form a feasible fractional matching on \(H_A\). Then, using integrality of bipartite matching, this fractional matching will be converted into an ordinary matching of comparable size.

\begin{lemma}[The weights form a feasible fractional matching]
\label{lem:apx:fractional-matching}
For any alignment \(A\), the weights
\[
w_{e,e'}(A):=\frac{\Omega_{e,e'}(A)}{Q^2}
\]
form a feasible fractional matching on the auxiliary bipartite graph \(H_A\).
That is,
\[
\sum_{e'\in D_2} w_{e,e'}(A)\le 1
\quad\text{for all }e\in D_1,
\qquad
\sum_{e\in D_1} w_{e,e'}(A)\le 1
\quad\text{for all }e'\in D_2.
\]
Moreover,
\[
\sum_{e\in D_1}\sum_{e'\in D_2} w_{e,e'}(A)
=
\frac{f_{\mathrm{str}}(A)}{Q^2}.
\]
\end{lemma}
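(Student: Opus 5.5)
The total-weight identity follows directly from Lemma~\ref{lem:apx:fstr-decomposition}: dividing both sides of that decomposition by \(Q^2\) gives \(\sum_{e}\sum_{e'} w_{e,e'}(A)=f_{\mathrm{str}}(A)/Q^2\). Nonnegativity of the weights is immediate because each \(x_{ij}\ge 0\). The real content is therefore the two degree constraints. By the symmetry of the construction between the two strings, it suffices to treat a fixed left contact \(e=\{i,i'\}\in D_1\) with \(i<i'\); the argument for a right contact \(e'\in D_2\) is identical with the roles of rows and columns swapped.

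Fix \(e=\{i,i'\}\), \(i<i'\). For each \(e'=\{j,j'\}\in D_2\) with \(j<j'\), we have \(\Omega_{e,e'}(A)=x_{ij}x_{i'j'}\). The key point is that \(D_2\) is a matching (degree-\(1\)), so distinct contacts \(e'\) use pairwise disjoint endpoint sets. Thus the map \(e'\mapsto (j,j')\) is injective, and no index \(j\) appears as an endpoint of two different contacts. Hence
\[
\sum_{e'\in D_2}\Omega_{e,e'}(A)=\sum_{\{j,j'\}\in D_2,\,j<j'} x_{ij}\,x_{i'j'}
\le \Bigl(\sum_{j\in J_1} x_{ij}\Bigr)\Bigl(\sum_{j'\in J_2} x_{i'j'}\Bigr),
\]
where \(J_1\) is the set of smaller endpoints and \(J_2\) the set of larger endpoints of contacts in \(D_2\). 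The inequality holds because expanding the product on the right contains every term on the left (each pairing \((j,j')\) is a distinct pair, since endpoints are not reused) together with additional nonnegative cross terms. Then the row-sum bounds \(\sum_j x_{ij}\le Q\) and \(\sum_{j'} x_{i'j'}\le Q\) bound the right-hand side by \(Q^2\). Dividing by \(Q^2\) gives \(\sum_{e'} w_{e,e'}(A)\le 1\).

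The only delicate step is the product bound: I must make sure no single term \(x_{ij}x_{i'j'}\) is counted twice, which is exactly where degree-\(1\) is used. In fact even without degree-\(1\), distinct contacts give distinct ordered pairs \((j,j')\), so the left sum is a sub-sum of the full product \(\sum_{j,j'}x_{ij}x_{i'j'}=(\sum_j x_{ij})(\sum_{j'}x_{i'j'})\). I would present it that way, since it is cleaner. The column-side constraint then uses the column sums \(\sum_i x_{ij}\le Q\) in the same manner.
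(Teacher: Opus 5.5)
Your proposal is correct and follows the paper's proof: the paper also bounds $\sum_{e'\in D_2}\Omega_{e,e'}(A)$ by the full product $\bigl(\sum_{j} x_{ij}\bigr)\bigl(\sum_{j'} x_{i'j'}\bigr)\le Q^2$, handles a fixed $e'\in D_2$ symmetrically with column sums, and divides Lemma~\ref{lem:apx:fstr-decomposition} by $Q^2$ to get the total-weight identity. The sub-sum version you settle on is the right one, since the degree-$1$ assumption is not needed for this lemma.
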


\begin{proof}
Fix a contact \(e=\{i,i'\}\in D_1\), written with \(i<i'\).
The total contribution incident to this fixed contact is bounded by allowing its two endpoint blocks to connect to arbitrary blocks on the other side. Thus
\[
\sum_{e'\in D_2} \Omega_{e,e'}(A)
\le
\sum_{j=1}^{N}\sum_{j'=1}^{N} x_{ij}x_{i'j'}
=
\Bigl(\sum_{j=1}^N x_{ij}\Bigr)
\Bigl(\sum_{j'=1}^N x_{i'j'}\Bigr)
\le Q^2.
\]
Dividing by \(Q^2\) gives
\[
\sum_{e'\in D_2} w_{e,e'}(A)\le 1.
\]

The same argument with the two sides reversed gives the corresponding bound for a fixed contact \(e'=\{j,j'\}\in D_2\). Indeed,
\[
\sum_{e\in D_1} \Omega_{e,e'}(A)
\le
\sum_{i=1}^{N}\sum_{i'=1}^{N} x_{ij}x_{i'j'}
=
\Bigl(\sum_{i=1}^N x_{ij}\Bigr)
\Bigl(\sum_{i'=1}^N x_{i'j'}\Bigr)
\le Q^2.
\]
Hence
\[
\sum_{e\in D_1} w_{e,e'}(A)\le 1.
\]

Finally, by Lemma~\ref{lem:apx:fstr-decomposition},
\[
\sum_{e\in D_1}\sum_{e'\in D_2}w_{e,e'}(A)
=
\frac{1}{Q^2}
\sum_{e\in D_1}\sum_{e'\in D_2}\Omega_{e,e'}(A)
=
\frac{f_{\mathrm{str}}(A)}{Q^2}.
\]
\end{proof}

The next lemma converts the fractional contact-level weights into an ordinary matching in the auxiliary bipartite graph.

\begin{lemma}[Integral matching extraction]\label{lem:apx:matching-extraction} From any alignment \(A\), one can compute in polynomial time a matching \(M\) in \(H_A\) such that
\[
|M|\ge \frac{f_{\mathrm{str}}(A)}{Q^2}.
\]
\end{lemma}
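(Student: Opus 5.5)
The plan is to combine Lemma~\ref{lem:apx:fractional-matching} with the integrality of the bipartite matching polytope. First, from the alignment \(A\) we compute all block-level counts \(x_{ij}\) by a single scan of the columns of \(A\). This takes time polynomial in \(L=NQ\), and \(Q\) is polynomial in \(N\) for fixed \(\varepsilon\) and \(\lambda\). From these counts we compute \(\Omega_{e,e'}(A)=x_{ij}x_{i'j'}\) for every pair of contacts. Since \(|D_1|,|D_2|\le N/2\), there are at most \(N^2/4\) such pairs. This determines the edge set \(E_A\) and the weights \(w_{e,e'}(A)\), which are rationals with polynomially bounded encoding.

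Next, by Lemma~\ref{lem:apx:fractional-matching}, the vector \(w(A)\) is a feasible point of the bipartite fractional matching polytope
\[
P=\Bigl\{y\in\mathbb{R}_{\ge 0}^{E_A}:\ \textstyle\sum_{e'}y_{e,e'}\le 1\ \forall e\in D_1,\ \sum_{e}y_{e,e'}\le 1\ \forall e'\in D_2\Bigr\},
\]
and its total weight equals \(f_{\mathrm{str}}(A)/Q^2\). The constraint matrix of \(P\) is the incidence matrix of a bipartite graph, which is totally unimodular. Hence every vertex of \(P\) is integral, that is, it is the indicator vector of a matching. Because \(P\) is a bounded nonempty polytope, the maximum of the linear objective \(\sum y_{e,e'}\) is attained at a vertex. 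A maximum-cardinality matching \(M\) in \(H_A\) therefore satisfies
\[
|M|=\max_{y\in P}\sum y_{e,e'}\ \ge\ \sum w_{e,e'}(A)=\frac{f_{\mathrm{str}}(A)}{Q^2}.
\]
Such an \(M\) can be computed in polynomial time with any standard maximum bipartite matching algorithm, such as Hopcroft--Karp, on \(H_A\). No linear program needs to be solved.

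The only point that needs care is the polynomial-time claim, since the input alignment has size polynomial in \(NQ\) and not in \(N\) alone. This is resolved by noting that \(Q=O(N)\) for fixed \(\lambda\), \(\varepsilon\) and \(c_{\mathrm{gap}}\), so that \(L=O(N^2)\). The integrality step is classical. One could alternatively avoid citing total unimodularity by rounding \(w(A)\) to an integral matching through cycle-canceling: repeatedly shift weight around even cycles and along maximal paths of fractional edges, using bipartiteness, without decreasing the total. I expect, however, that invoking integrality of the bipartite matching polytope suffices.
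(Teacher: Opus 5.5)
Your proposal is correct and follows essentially the same route as the paper: feasibility of the weights $w(A)$ from Lemma~\ref{lem:apx:fractional-matching}, integrality of the bipartite matching polytope, and a standard maximum bipartite matching algorithm. Your extra bookkeeping (computing the $x_{ij}$ from $A$ and noting $Q=O(N)$ for fixed $\lambda,\varepsilon$) spells out the polynomial-time claim, which the paper leaves implicit.
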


\begin{proof}
By Lemma~\ref{lem:apx:fractional-matching}, the weights \(w_{e,e'}(A)\) define a feasible fractional matching on the bipartite graph \(H_A\) with value
\[
\sum_{e\in D_1}\sum_{e'\in D_2} w_{e,e'}(A)
=
\frac{f_{\mathrm{str}}(A)}{Q^2}.
\]
Since the bipartite matching polytope is integral, the maximum integral matching size in \(H_A\) is at least the value of this feasible fractional matching.
Equivalently, in a bipartite graph, the maximum fractional matching value equals the maximum integral matching size.
Therefore there exists a matching \(M\subseteq E_A\) with
\[
|M|\ge \frac{f_{\mathrm{str}}(A)}{Q^2}.
\]
Such a matching can be computed in polynomial time by a standard bipartite maximum-matching algorithm.
\end{proof}

The matching \(M\) obtained here matches contacts of \(D_1\) to contacts of \(D_2\). In the next step, we convert this contact-level matching into a CMO feasible solution.

\subsection{From a contact matching to a feasible CMO solution}
We now convert a contact-level matching \(M\subseteq E_A\) in the auxiliary graph \(H_A\) into a feasible CMO solution. This is where the degree-\(1\) assumption is used: because contacts in each contact map are vertex-disjoint, a matching between contacts determines a well-defined partial correspondence between their endpoints.

\begin{lemma}[A matching in \(H_A\) yields an order-preserving partial bijection]
\label{lem:apx:matching-to-cmo}
Let \(M\) be any matching in \(H_A\). Then one can compute in polynomial time a feasible CMO solution of value at least \(|M|\).
\end{lemma}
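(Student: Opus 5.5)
The plan is to read the CMO solution directly off the matched contacts. For each matched pair \((e,e')\in M\), write \(e=\{i,i'\}\in D_1\) with \(i<i'\) and \(e'=\{j,j'\}\in D_2\) with \(j<j'\). We set
\[
f(i):=j,\qquad f(i'):=j'.
\]
Let \(S\) be the set of all endpoints of contacts of \(D_1\) covered by \(M\), and let \(S'\) be the set of all endpoints of contacts of \(D_2\) covered by \(M\). This is clearly computable in polynomial time from \(M\).

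First I will check that \(f:S\to S'\) is a well-defined bijection, using the degree-\(1\) assumption.

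\begin{itemize}
\item Since \(D_1\) is a matching, each \(i\in S\) lies in exactly one contact \(e\in D_1\).
\item Since \(M\) is a matching in \(H_A\), that contact \(e\) is matched to exactly one \(e'\in D_2\). So \(f(i)\) is defined exactly once.
\item Symmetrically, each \(j\in S'\) lies in a unique contact of \(D_2\), which is matched to a unique contact of \(D_1\).
\item Within a matched pair, the endpoints are paired by their order (smaller to smaller). Hence \(j\) has exactly one preimage, and \(f\) is injective and onto \(S'\).
\end{itemize}

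Next comes order preservation, which I expect to be the only real step; it uses the monotonicity property established before Lemma~\ref{lem:apx:fstr-decomposition}. The key observation is that every assignment \(f(i)=j\) satisfies \(x_{ij}>0\). Indeed, \((e,e')\in E_A\) means \(\Omega_{e,e'}(A)=x_{ij}x_{i'j'}>0\), so both factors are positive.

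Now take \(i,k\in S\) with \(i<k\), and set \(j=f(i)\) and \(l=f(k)\).
\begin{itemize}
\item \(l=j\) is excluded by injectivity.
\item If \(l<j\), then \(i<k\) and \(l<j\) together with monotonicity give \(x_{ij}x_{kl}=0\). This contradicts \(x_{ij}>0\) and \(x_{kl}>0\).
\end{itemize}
Hence \(f(i)<f(k)\), so \(f\) is order-preserving and \((S,S',f)\) is a feasible CMO solution. The subtle point to state explicitly is that a block \(B_j\) may be split across several blocks of \(S^{(1)}\), so \(x_{ij}>0\) and \(x_{kj}>0\) can both hold. The equal-image case therefore cannot be ruled out by monotonicity and must be excluded by injectivity, which in turn rests on degree-\(1\).

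Finally, for the value: each \((e,e')\in M\) with \(e=\{i,i'\}\) gives \(i,i'\in S\) and \(\{f(i),f(i')\}=\{j,j'\}=e'\in D_2\), so \(e\) is a preserved edge. Distinct matched pairs have distinct \(D_1\)-contacts \(e\), so at least \(|M|\) edges are preserved, and the CMO value is at least \(|M|\).
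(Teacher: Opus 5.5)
Your proposal is correct and matches the paper's proof: the same map \(f\) read off from the matched contacts, well-definedness and injectivity from degree-\(1\) plus the matching property, and order preservation from \(x_{i,f(i)}>0\), with injectivity ruling out equal images. The only cosmetic difference is that you cite the already-established monotonicity \(x_{ij'}x_{i'j}=0\), whereas the paper re-derives the ordering directly from aligned position pairs \((p,p')\) and \((q,q')\).
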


\begin{proof}
For each matched pair \((e,e')\in M\), write
\[
e=\{i,i'\}\in D_1,\qquad e'=\{j,j'\}\in D_2
\]
with \(i<i'\) and \(j<j'\). Define
\[
f(i):=j,\qquad f(i'):=j'.
\]
This defines \(f\) on the endpoints of contacts of \(D_1\) that appear in matched pairs of \(M\).

This assignment turns the contact-level matching into a partial map on positions. We verify that it is well-defined, injective, and order-preserving.

Since \(M\) is a matching in \(H_A\), no contact of \(D_1\) or \(D_2\) is used more than once. Moreover, \(D_1\) and \(D_2\) are degree-\(1\) contact maps, so their contacts are vertex-disjoint. Hence no position receives two conflicting images, and the resulting map is injective on its domain.

It remains to show that \(f\) is order-preserving. Let \(u<v\) be two positions in the domain of \(f\). For any position \(z\) in the domain of \(f\), the matched pair defining \(f(z)\) is an edge of \(H_A\). 
Since \(\Omega_{e,e'}(A)>0\), both endpoint correspondences used by this matched contact pair have positive block-level count. In particular,
\[
x_{z,f(z)}>0.
\]
Applying this to \(z=u\) and \(z=v\), there exist aligned position pairs
\[
(p,p')\in B_u\times B_{f(u)},
\qquad
(q,q')\in B_v\times B_{f(v)}.
\]
Since \(u<v\), every position in \(B_u\) precedes every position in \(B_v\), so \(p<q\). By the order-preserving property of alignments, this implies \(p'\le q'\). Since the blocks are ordered consecutively, \(p'\le q'\) implies \(f(u)\le f(v)\). Since \(f\) is injective, we conclude
\[
f(u)<f(v).
\]
Therefore \(f\) is order-preserving.

Let \(S\) be the domain of \(f\) and let \(S':=f(S)\). 
Then \((S,S',f)\) is a feasible CMO solution. Each matched pair \((e,e')\in M\) maps one contact of \(D_1\) to a contact of \(D_2\), so the value of this CMO solution is at least \(|M|\). The construction is polynomial-time.
\end{proof}

\subsection{Approximation transfer}
We now complete the reverse direction of the reduction. Using the lower bound on \(\mathrm{OPT}_\lambda\) obtained from the embedding lemma and the extraction lemmas above, we show that a near-optimal alignment of the reduced instance yields a near-optimal solution to the original degree-\(1\) CMO instance.

\begin{lemma}[Near-optimal alignment yields near-optimal CMO solution]
\label{lem:apx:extract}
Let \(t^\star:=\mathrm{CMO}(D_1,D_2)\), and let \(A\) be an alignment of the reduced instance such that
\[
\mathrm{val}_\lambda(A)\ge (1-\delta)\mathrm{OPT}_\lambda,
\qquad
\delta=\varepsilon/2.
\]
Then one can compute in polynomial time a feasible CMO solution of value at least \((1-\varepsilon)t^\star\).
\end{lemma}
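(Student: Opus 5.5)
Dispose of the case \(t^\star=0\) first: any feasible CMO solution, for instance the empty one, already has value \(\ge 0=(1-\varepsilon)t^\star\). From now on assume \(t^\star\ge 1\). The plan is to chain the lower bound \eqref{eq:apx:opt-lower-scaled} on \(\mathrm{OPT}_\lambda\) with the upper bound \(f_{\mathrm{seq}}(A)\le 0\) from Lemma~\ref{lem:apx:fseq-bound}. This gives a lower bound on \(f_{\mathrm{str}}(A)/Q^2\). We then decode through Lemmas~\ref{lem:apx:matching-extraction} and~\ref{lem:apx:matching-to-cmo}.

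\emph{Step 1.} Since \(\mathrm{OPT}_\lambda\ge 0\) and \(f_{\mathrm{seq}}(A)\le 0\), we have
\[
\lambda f_{\mathrm{str}}(A)\ge \mathrm{val}_\lambda(A)\ge (1-\delta)\,\mathrm{OPT}_\lambda .
\]
Combining this with \eqref{eq:apx:opt-lower-scaled} yields
\[
\lambda f_{\mathrm{str}}(A)\ge \Bigl(1-\tfrac{\varepsilon}{2}\Bigr)^{2}\lambda Q^2 t^\star .
\]
Dividing by \(\lambda Q^2>0\) and using \((1-\varepsilon/2)^2\ge 1-\varepsilon\), we get \(f_{\mathrm{str}}(A)/Q^2\ge (1-\varepsilon)t^\star\).

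\emph{Step 2.} Lemma~\ref{lem:apx:matching-extraction} gives, in polynomial time, a matching \(M\) in \(H_A\) with \(|M|\ge f_{\mathrm{str}}(A)/Q^2\). Lemma~\ref{lem:apx:matching-to-cmo} then converts \(M\) into a feasible CMO solution of value at least \(|M|\ge(1-\varepsilon)t^\star\).

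\emph{Step 3 (polynomiality).} The construction of \(H_A\) only requires the counts \(x_{ij}\) and the products \(\Omega_{e,e'}(A)\). These are computable in time polynomial in the size of the reduced instance. That size is polynomial in \(N\) for fixed \(\varepsilon\), because \(Q=O(N/(\lambda\varepsilon))\).

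The main subtlety is Step 1. The multiplicative guarantee \((1-\delta)\mathrm{OPT}_\lambda\) is meaningful only because \(\mathrm{OPT}_\lambda\ge 0\). We also need the nonpositive sequence term to be discarded in the right direction: it can only help, since \(\lambda f_{\mathrm{str}}(A)\ge \mathrm{val}_\lambda(A)\). The choice of \(Q\) in \eqref{eq:apx:block-size} already absorbed the gap loss into \eqref{eq:apx:opt-lower-scaled}, so no further estimates of the gap penalties are needed. The remaining step is the elementary inequality \((1-\varepsilon/2)^2\ge 1-\varepsilon\).
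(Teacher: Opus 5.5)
Your proposal is correct and follows the paper's proof essentially step for step: handle \(t^\star=0\) separately, discard the nonpositive \(f_{\mathrm{seq}}(A)\), chain the near-optimality hypothesis with \eqref{eq:apx:opt-lower-scaled} and \((1-\varepsilon/2)^2\ge 1-\varepsilon\), then decode via Lemmas~\ref{lem:apx:matching-extraction} and~\ref{lem:apx:matching-to-cmo}. One minor point: the inequality chain does not use \(\mathrm{OPT}_\lambda\ge 0\), only \(1-\delta>0\) (to multiply \eqref{eq:apx:opt-lower-scaled} by \(1-\delta\)); nonnegativity of \(\mathrm{OPT}_\lambda\) matters for the PTAS guarantee to make sense.
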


\begin{proof}
If \(t^\star=0\), the empty CMO solution already has value \(0=(1-\varepsilon)t^\star\), so the claim is trivial. Hence assume \(t^\star\ge 1\).

By Lemma~\ref{lem:apx:fseq-bound}, \(f_{\mathrm{seq}}(A)\le 0\). Therefore
\[
\lambda f_{\mathrm{str}}(A)
\ge
\mathrm{val}_\lambda(A).
\]
Using the near-optimality of \(A\) and the lower bound \eqref{eq:apx:opt-lower-scaled}, we obtain
\[
\lambda f_{\mathrm{str}}(A)
\ge
\mathrm{val}_\lambda(A)
\ge
(1-\delta)\mathrm{OPT}_\lambda
\ge
\left(1-\frac{\varepsilon}{2}\right)^2
\lambda Q^2t^\star,
\]
where \(\delta=\varepsilon/2\). Since
\[
\left(1-\frac{\varepsilon}{2}\right)^2
\ge
1-\varepsilon,
\]
it follows that
\[
\frac{f_{\mathrm{str}}(A)}{Q^2}
\ge
(1-\varepsilon)t^\star.
\]

By Lemma~\ref{lem:apx:matching-extraction}, we can compute a matching
\(M\) in \(H_A\) with
\[
|M|\ge \frac{f_{\mathrm{str}}(A)}{Q^2}.
\]
Then, by Lemma~\ref{lem:apx:matching-to-cmo}, this matching yields a feasible CMO solution of value at least \(|M|\). Therefore the extracted CMO solution has value at least \((1-\varepsilon)t^\star\).
\end{proof}

\begin{theorem}[No PTAS for \(\MSAOPTlambda\)] 
\label{thm:apx:no-ptas}
For any fixed rational constant \(\lambda\ge 1\), under the canonical unit non-gap scoring scheme, \(\MSAOPTlambda\) admits no PTAS unless \(\Ptime=\NP\), already for \(k=2\).
\end{theorem}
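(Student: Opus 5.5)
The plan is a contradiction argument that assembles the reduction of Section~\ref{sec:apx-construction} with the extraction lemmas. Fix a rational $\lambda\ge 1$ and suppose that \MSAOPTlambda\ with $k=2$ admits a PTAS. That is, for every fixed $\delta\in(0,1)$ there is a polynomial-time algorithm returning an alignment $A$ with $\mathrm{val}_\lambda(A)\ge(1-\delta)\mathrm{OPT}_\lambda$. This guarantee is meaningful here because $\mathrm{OPT}_\lambda\ge 0$, as shown above. From this algorithm I will build a PTAS for degree-$1$ CMO, which contradicts its $\MAXSNP$-hardness~\cite{Goldman1999} together with~\cite{Arora1998} unless $\Ptime=\NP$.

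Given $\varepsilon\in(0,1)$ and a degree-$1$ instance $(D_1,D_2)$, I proceed as follows.
\begin{enumerate}
\item Pad both maps to $[N]$.
\item Compute $Q=\lceil 4c_{\mathrm{gap}}N/(\lambda\varepsilon)\rceil$ and $L=NQ$.
\item Output the strings $a^L$ and $b^L$ together with the expanded pair sets $\widehat C_1,\widehat C_2$.
\item Run the hypothetical PTAS on this instance with accuracy $\delta=\varepsilon/2$.
\item Apply Lemma~\ref{lem:apx:extract} to the returned alignment. Concretely, build $H_A$, extract a matching via Lemma~\ref{lem:apx:matching-extraction}, and convert it via Lemma~\ref{lem:apx:matching-to-cmo}.
\end{enumerate}
Lemma~\ref{lem:apx:extract} gives a feasible CMO solution of value at least $(1-\varepsilon)t^\star$. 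The case $t^\star=0$ is already handled inside that lemma.

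The remaining work, and the main obstacle, is checking that the whole pipeline runs in time polynomial in the CMO input size for each fixed $\varepsilon$.
\begin{itemize}
\item Since $\lambda$, $c_{\mathrm{gap}}$ and $\varepsilon$ are constants, $Q=O(N)$ and $L=O(N^2)$.
\item Each contact expands into $Q^2=O(N^2)$ designated pairs, so $|\widehat C_\nu|\le N\cdot Q^2=O(N^3)$.
\item The reduced instance therefore has polynomial size. The PTAS with the fixed accuracy $\delta=\varepsilon/2$ runs in time polynomial in that size, hence polynomial in $N$.
\item Computing the counts $x_{ij}$, the weights $\Omega_{e,e'}$, the graph $H_A$ and a maximum bipartite matching is polynomial.
\end{itemize}

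I will also note two assumptions. First, $c_{\mathrm{gap}}>0$, as assumed for the nontrivial gap scheme. If instead $c_{\mathrm{gap}}=0$, I would take $Q=1$, and the bound \eqref{eq:apx:opt-lower-scaled} holds trivially. Second, $|\Sigma|\ge 2$, which guarantees that distinct symbols $a\ne b$ exist. Together these yield a PTAS for degree-$1$ CMO, the desired contradiction, and this holds already for $k=2$.
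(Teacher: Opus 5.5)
Your proposal is correct and follows the paper's proof essentially step for step. You assume a PTAS, run it with accuracy $\delta=\varepsilon/2$ on the block-expanded instance $(a^L,b^L;\widehat C_1,\widehat C_2)$, and invoke Lemma~\ref{lem:apx:extract} to obtain a $(1-\varepsilon)$-approximate degree-$1$ CMO solution, which contradicts \MAXSNP-hardness. Your explicit size accounting ($Q=O(N)$, $L=O(N^2)$, $|\widehat C_\nu|=O(N^3)$) and your remarks on $c_{\mathrm{gap}}=0$ and $|\Sigma|\ge 2$ only spell out what the paper states briefly or assumes.
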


\begin{proof}
Assume that, for the fixed \(\lambda\ge 1\), \(\MSAOPTlambda\) admits a PTAS.
Given a degree-\(1\) CMO instance and an accuracy parameter \(\varepsilon\in(0,1)\), construct the reduced instance above and apply the PTAS with accuracy parameter \(\delta=\varepsilon/2\). The reduced instance is constructed in polynomial time for fixed \(\varepsilon\) and \(\lambda\).

The PTAS returns an alignment \(A\) satisfying
\[
\mathrm{val}_\lambda(A)\ge (1-\delta)\mathrm{OPT}_\lambda.
\]
By Lemma~\ref{lem:apx:extract}, we can then compute in polynomial time a feasible CMO solution of value at least
\[
(1-\varepsilon)\mathrm{CMO}(D_1,D_2).
\]
Thus a PTAS for \(\MSAOPTlambda\) would imply a PTAS for degree-\(1\) CMO.
This contradicts the no-PTAS consequence of the \(\MAXSNP\)-hardness of degree-\(1\) CMO~\cite{Goldman1999,Arora1998}, unless \(\Ptime=\NP\).
\end{proof}

\section{Conclusion}
We introduced a formal structure-informed multiple sequence alignment problem, \textsf{MSA-S}, in which a fixed pairwise string score is combined with a binary overlap score on designated position-pairs. For a broad class of fixed pairwise string scoring schemes, we established NP-completeness of the associated decision problem, and further showed that NP-hardness persists under a nonempty designated-pair restriction. We also proved that, under the canonical unit scheme for the non-gap symbol-pair scoring rule, the associated scalarized optimization problem admits no PTAS even for two input strings \((k=2)\), unless \(\mathsf{P}=\mathsf{NP}\).
These results provide a formal complexity-theoretic baseline for subsequent study of structure-informed multiple sequence alignment, showing that even a minimal fixed-score formulation is difficult both in its decision form and in its scalarized approximation form.

\section{Acknowledgement}
Kanazawa Yoshiki acknowledges support from the Taikichiro Mori Memorial Research Grants and the Kenkyu-no-Susume Scholarship.

\bibliographystyle{unsrt}
\bibliography{ref}

\end{document}